\title{Relational to RDF Data Exchange in Presence of a Shape Expression Schema}
\author{Iovka Boneva, Jose Lozano, S\l{}awek Staworko}
\institute{Univ. Lille, CNRS, Centrale Lille, Inria, UMR 9189 - CRIStAL - Centre de Recherche en Informatique Signal et Automatique de Lille, F-59000 Lille, France}
\newcommand{\Rsig}{\mathcal{R}}
\newcommand{\Fsig}{\mathcal{F}}
\newcommand{\FOsig}{\mathcal{W}}
\newcommand{\Tsig}{\mathcal{T}}
\newcommand{\GTsig}{\mathcal{G}_\Tsig}
\newcommand{\Vsig}{\mathcal{V}}
\newcommand{\DES}{\mathcal{E}} 
\newcommand{\Model}{M}
\newcommand{\Inst}{I} 
\newcommand{\Ginst}{J} 
\newcommand{\Rsch}{\mathbf{R}} 
\newcommand{\ShexSch}{\mathbf{S}} 
\newcommand{\Vars}{V}
\newcommand{\Fint}{{F_{\mathrm{int}}}}
\newcommand{\fvars}{\operatorname{\mathit{fvars}}}
\newcommand{\vect}[1]{\mathbf{#1}}
\newcommand{\IriSet}{\mathbf{Iri}}
\newcommand{\NullSet}{\mathbf{Null}}
\newcommand{\NullLitSet}{\mathbf{NullLit}}
\newcommand{\BlankSet}{\mathbf{Blank}}
\newcommand{\LitSet}{\mathbf{Lit}}
\newcommand{\Dom}{\mathbf{Dom}}
\newcommand{\LitSym}{\mathit{Lit}}
\newcommand{\Triple}{\mathit{Triple}}
\newcommand{\typing}{\mathit{typing}}
\newcommand{\tconstr}{\mathit{tc}}
\newcommand{\multmostone}{\mathit{mult^{\mathord{\le}\!1}}}
\newcommand{\multleastone}{\mathit{mult^{\mathord{\ge}\!1}}}
\newcommand{\interval}[1]{{\ensuremath{\mathord{\text{\normalfont\fontfamily{lmtt}\selectfont{}#1}}}}}
\newcommand{\ONE}{\interval{1}\xspace}
\newcommand{\MAYBE}{\interval{?}\xspace}
\newcommand{\MANY}{\interval{*}\xspace}
\newcommand{\PLUS}{\interval{+}\xspace}
\newcommand{\fdep}{\Sigma_{\mathrm{fd}}}
\newcommand{\stdep}{\Sigma_{\mathrm{st}}}
\newcommand{\shexdep}{\Sigma_{\ShexSch}}
\newcommand{\chasestep}[4]{#1 \smash{\xrightarrow{#2,#3}} #4}
\newcommand{\dbl}{\mathbin{::}}
\newcommand{\dom}{\mathit{dom}}%
\newcommand{\ran}{\mathit{ran}}%
\newcommand{\fdprop}{\operatorname{\textsc{fdprop}}}
\newcommand{\fd}{\mathit{fd}}
\newcommand{\Bug}{\ensuremath{\mathsf{TBug}}\xspace}
\newcommand{\User}{\ensuremath{\mathsf{TUser}}\xspace}
\newcommand{\Emp}{\ensuremath{\mathsf{TEmp}}\xspace}
\newcommand{\Test}{\ensuremath{\mathsf{TTest}}\xspace}
\newcommand{\col}{\mathord{\kern-1pt\interval{:}\kern-1.25pt}}
\newcommand{\mel}{\ensuremath{\col\mathsf{email}}\xspace}
\newcommand{\name}{\ensuremath{\col\mathsf{name}}\xspace}
\newcommand{\descr}{\ensuremath{\col\mathsf{descr}}\xspace}
\newcommand{\rep}{\ensuremath{\col\mathsf{rep}}\xspace}
\newcommand{\repr}{\ensuremath{\col\mathsf{repro}}\xspace}
\newcommand{\rel}{\ensuremath{\col\mathsf{related}}\xspace}
\newcommand{\phone}{\ensuremath{\col\mathsf{phone}}\xspace}
\newcommand{\grp}{\ensuremath{\col\mathsf{covers}}\xspace}
\newcommand{\prep}{\ensuremath{\col\mathsf{prepare}}\xspace}
\newcommand{\RBug}{\textit{Bug}\xspace}
\newcommand{\RUser}{\textit{User}\xspace}
\newcommand{\REmail}{\textit{Email}\xspace}
\newcommand{\RRel}{\textit{Rel}\xspace}
\newcommand{\pToI}{\textit{pers2iri}\xspace}
\newcommand{\bToI}{\textit{bug2iri}\xspace}
\newcommand{\rdffrominstance}{\operatorname{\mathit{inst-to-rdf}}}
\newcommand{\rdftoinst}{\operatorname{\mathit{rdf-to-inst}}}
\begin{document}
\maketitle
\begin{abstract}
  We study the relational to RDF data exchange problem, where the
  target constraints are specified using Shape Expression schema
  (ShEx). We investigate two fundamental problems: 1)
  \emph{consistency} which is checking for a given data exchange
  setting whether there always exists a solution for any source
  instance, and 2) constructing a \emph{universal solution} which is a
  solution that represents the space of all solutions. We propose to
  use \emph{typed IRI constructors} in source-to-target tuple generating dependencies
  to create the IRIs of the RDF graph from the values in the relational instance,
  and we translate ShEx into a set of target 
  dependencies. We also identify data exchange settings that are
  \emph{key covered}, a property that is decidable and guarantees
  consistency. Furthermore, we show that this property is a sufficient
  and necessary condition for the existence of universal solutions for
  a practical subclass of \emph{weakly-recursive} ShEx.
\end{abstract}


\vspace{-2em}
\section{Introduction}
\vspace{-1em}
\emph{Data exchange} can be seen as a process of transforming an instance of one schema, called the \emph{source schema}, to an instance of another schema, called the \emph{target schema}, according to a set of rules, called \emph{source-to-target tuple generating dependencies} (st-tgds). But more generally, for a given source schema, any instance of the target schema that satisfies the dependencies is a \emph{solution} to the data exchange problem. Naturally, there might be no solution, and then we say that the setting is \emph{inconsistent}. Conversely, there might be a possibly infinite number of solutions, and a considerable amount of work has been focused on finding a \emph{universal solution}, which is an instance (potentially with incomplete information) that represents the entire space of solutions. Another fundamental and well-studied problem is checking \emph{consistency} of a data exchange setting i.e., given the source and target schemas and the st-tgds, does a solution exist for any source instance. For relational databases the consistency problem is in general known to be undecidable ~\cite{kolaitis:2006,beeri:1981} but a number of decidable and even tractable cases has been identified, for instance when a set of weakly-acyclic dependencies is used~\cite{fagin:2005a}.

\emph{Resource Description Framework} (RDF) \cite{w3c:rdf} is a well-established  format for publishing linked data on the Web, where \emph{triples} of the form $(\mathit{subject}, \mathit{predicate}, \mathit{object})$ allow to represent an edge-labeled graph.  While originally RDF was introduced schema-free to promote its adoption and wide-spread use, the use of RDF for storing and exchanging data among web applications has prompted the development of schema languages for RDF~\cite{ryman:2013,sirin:2010,w3c:shacl}. One such schema language, under continuous development, is Shape Expressions Schemas (ShEx)~\cite{staworko:2015a,boneva:17a}, which allows to define structural constraints on nodes and their immediate neighborhoods in a declarative fashion.

In the present paper, we study the problem of data exchange where the source is a relational database and the target is an RDF graph constrained with a ShEx schema. Although an RDF graph can be seen as a relational database with a single ternary relation $\Triple$, RDF graphs require using  \emph{Internationalized Resource Identifiers} (IRIs) as global identifiers for entities. Consequently, the framework for data exchange for relational databases cannot be directly applied \emph{as is} and we adapt it with the help of \emph{IRI constructors}, functions that assign IRIs to identifiers from a relational database instance. Their precise implementation is out of the scope of this paper and belongs to the vast domain of entity matching~\cite{kopcke:2010}. 

\begin{example}\label{ex:data-exchange}
Consider the relational database of bug reports in Figure~\ref{fig:db}, where the relation $\RBug$ stores a list of bugs with their description and ID of the user who reported the bug, the name of each user is stored in the relation $\RUser$ and her email in the relation $\REmail$. Additionally, the relation $\RRel$ identifies related bug reports for any bug report. 
\begin{figure}[htb]
\centering
\vspace{-1em}
\begin{tabular}[t]{c|ccc}
    \RBug \hspace{2pt}&\hspace{2pt}
    \uline{\textit{bid}}& \textit{descr} & \textit{uid}\\[2pt]
    \hline
    &1 & Boom! & 1 \\
    &2 & Kaboom! & 2 \\
    &3 & Kabang! & 1 \\
    &4 & Bang! & 3 
  \end{tabular}
  \hfill
  \begin{tabular}[t]{c|cc}
    \RUser \hspace{2pt}&\hspace{2pt}
    \uline{\textit{uid}}& \textit{name} \\[2pt]
    \hline
    & 1 & Jose\\
    & 2 & Edith\\
    & 3 & Steve89
  \end{tabular}
  \hfill
  \begin{tabular}[t]{c|cc}
    \REmail \hspace{2pt}&\hspace{2pt}
    \uline{\textit{uid}}& \textit{email} \\[2pt]
    \hline
    & 1 & j@ex.com\\
    & 2 & e@o.fr
  \end{tabular}
  \hfill
  \begin{tabular}[t]{c|cc}
    \RRel \hspace{2pt}&\hspace{2pt}
    \uline{\textit{bid}}& \uline{\textit{rid}} \\[2pt]
    \hline
    & 1 & 3 \\
    & 1 & 4\\
    & 2 & 4 
  \end{tabular}
\caption{Relational database (source)\label{fig:db}}
\end{figure}
\vspace{-10pt}

\noindent
Now, suppose that we wish to share the above data with a partner that has an already existing infrastructure for consuming bug reports in the form of RDF whose structure is described with the following ShEx schema (where $\col$ is some default prefix):
\begin{align*}
  \Bug \to{}& \{ \descr \dbl \LitSym^\ONE, \rep \dbl \User^\ONE, \rel \dbl \Bug^\MANY\}\\
  \User \to{}& \{ \name\dbl \LitSym^\ONE, \mel\dbl \LitSym^\ONE, \phone \dbl \LitSym^\MAYBE \}
\end{align*}
The above schema defines two types of (non-literal) nodes: $\Bug$ for describing bugs and $\User$ for describing users. Every bug has a description, a user who reported it, and a number of related bugs. Every user has a name, an email, and an optional phone number. 
The reserved symbol $\LitSym$ indicates that the corresponding value is a literal.

The mapping of the contents of the relational database to RDF is defined with the following logical rules (the free variables are implicitly universally quantified).
\begin{align*}
  \RBug(b,d,u) \Rightarrow{}& \Triple(\bToI(b),\descr,d) \land \Bug(\bToI(b)) \land{}  \\
  &\Triple(\bToI(b),\rep,\pToI(u))\\
  \RRel(b_1,b_2) \Rightarrow{}& \Triple(\bToI(b_1),\rel,\bToI(b_2))\\
  \RUser(u,n)  \Rightarrow{}& \Triple(\pToI(u),\name,n) \land \User(\pToI(u)) \\
  \RUser(u,n) \land \REmail (u,e) \Rightarrow{}& \Triple(\pToI(u), \mel, e) \land \LitSym(e)
\end{align*}
On the left-hand-side of each rule we employ queries over the source relational database, while on the right-hand-side we make corresponding assertions about the triples in the target RDF graph and the types of the nodes connected by the triples. The atomic values used in relational tables need to be carefully converted to IRIs with the help of IRI constructors $\pToI$ and $\bToI$. The constructors can be \emph{typed} i.e., the IRI they introduce are assigned a unique type in the same st-tgd.

We point out that in general, IRI constructors may use external data sources to properly assign to the identifiers from the relational database unique IRIs that identify the object in the RDF domain. For instance, the user Jose is our employee and is assigned the corresponding IRI $\mathsf{emp}\mathord{:}\mathsf{jose}$, the user Edith is not an employee but a registered user of our bug reporting tool and consequently is assigned the IRI $\mathsf{user}\mathord{:}\mathsf{edith}$, and finally, the user Steve89 is an anonymous user and is assigned a special IRI indicating it $\mathsf{anon\col 3}$.  

Figure~\ref{fig:rdf} presents an RDF instance that is a solution to the problem at hand.
We point out that the instance uses a (labeled) null literal $\bot_1$ for the email of Steve89 that is required by the ShEx schema but is missing in our database.\qed
\begin{figure}[htb]
\centering 
\vspace{-1em}
\newcommand{\sequal}{\hspace{1pt}\mathord{=}\hspace{1pt}}
\begin{tikzpicture}[>=latex]  
  \node (d3) at (-4,2.3) { ``Kabang!''};
  \node (d1) at (-1.5,2.5) { ``Boom!''};
  \node (d4) at (1.15,2.5) { ``Bang!''};
  \node (d2) at (4,2.5) { ``Kaboom!''};
  
  \node (bug3) at (-4,1) {$\mathsf{bug\col 3}$};
  \node (bug1) at (-1.5,1.25) {$\mathsf{bug\col 1}$};
  \node (bug4) at (1.15,1) {$\mathsf{bug\col 4}$};
  \node (bug2) at (4,1.25) {$\mathsf{bug\col 2}$};

  \node (user1) at (-2.5,0) {$\mathsf{emp}\col\mathsf{jose}$};
  \node (user2) at (3.75,0) {$\mathsf{user}\col\mathsf{edith}$};
  
  \node (emp1) at (0.9,-0.25) {$\mathsf{anon\col 3}$};  
  
  \node (n1) at (-3.25,-1.5) { ``Jose''};
  \node (e1) at (-1.75,-1.5) { ``j@ex.com''};

  \node (n2) at (3,-1.5) { ``Edith''};
  \node (e2) at (4.5,-1.5) { ``e@o.fr''};

  \node (n3) at (0.1,-1.65) { ``Steve89''};
  \node (e3) at (1.8,-1.65) {$\bot_1$};
  
  \draw (user1) edge[->] node[above,sloped] {\name} (n1);
  \draw (user1) edge[->] node[above,sloped] {\mel} (e1);
  
  \draw (user2) edge[->] node[above,sloped] {\name} (n2);
  
  \draw (emp1) edge[->] node[above,sloped] {\name} (n3);
  \draw (emp1) edge[->] node[above,sloped] {\mel} (e3);

  \draw (user2) edge[->] node[above,sloped] {\mel} (e2);    
  
  \draw (bug1) edge[->] node[above,sloped] {\rel} (bug4);
  \draw (bug1) edge[->] node[above,sloped] {\rel} (bug3);
  \draw (bug1) edge[->] node[above,sloped] {\rep} (user1);
  \draw (bug1) edge[->] node[above,sloped] {\descr} (d1);
  
  \draw (bug2) edge[->] node[above,sloped] {\rel} (bug4);
  \draw (bug2) edge[->] node[above,sloped] {\rep} (user2);
  \draw (bug2) edge[->] node[above,sloped] {\descr} (d2);
  
  \draw (bug3) edge[->] node[above,sloped] {\rep} (user1);
  \draw (bug3) edge[->] node[above,sloped] {\descr} (d3);
  
  \draw (bug4) edge[->] node[below,sloped] {\rep} (emp1);
  \draw (bug4) edge[->] node[above,sloped] {\descr} (d4);
\end{tikzpicture}
\caption{Target RDF graph (solution)\label{fig:rdf}}
\end{figure}
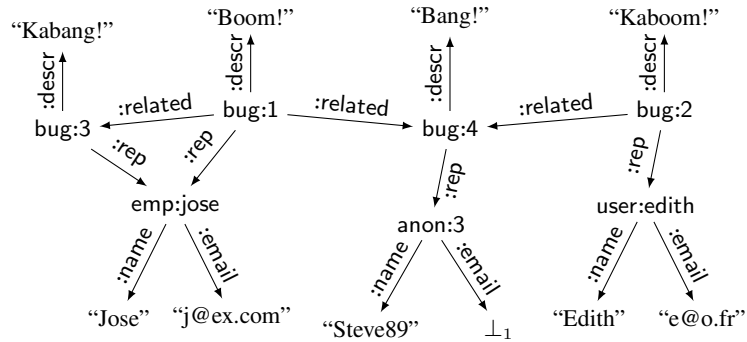
\end{example}
\vspace{-10pt}
The presence of target schema raises the question of consistency.
On the one hand, we can prove that for any instance of the relational database in Example~\ref{ex:data-exchange} there exists a target solution that satisfies the schema and the set of source-to-target tuple generating dependencies. On the other hand, suppose we 
allow
a user to have multiple email addresses, by changing the \emph{key} of $\REmail$ to both $\mathit{uid}$ and $\mathit{email}$). Then, the setting would not be consistent as one could construct an instance of the relational database, with multiple email addresses for a single user, for which there would be no solution.

Our investigation provides a preliminary analysis of the consistency problem for relational to RDF data exchange with target ShEx schema. Our contribution can be summarized as follows:
\begin{itemize}
\item a formalization of relational to RDF data exchange with target ShEx schema and typed IRI constructors.
\item a decidable characterization of a \emph{fully-typed} \emph{key-covered} data exchange setting that is a sufficient and necessary condition for consistency. 
\item an additional restriction of weak-recursion on ShEx schemas that ensures the existence of universal solution. 
\end{itemize}

\noindent
\textbf{Related Work.} \textit{Relational Data Exchange, Consistency.}
The theoretical foundations of data exchange for relational databases are laid in \cite{fagin:2005a,arenas:2010a}. 
Source-to-target dependencies with Skolem functions were introduced by nested dependencies \cite{DBLP:conf/vldb/FuxmanHHMPP06} in order to improve the quality of the data exchange solution.
General existentially quantified functions are possible in second order tgds \cite{arenas:2013a}.
Consistency in the case of relational data exchange is undecidable, and decidable classes usually rely on chase termination  ensured by restrictions such as acyclicity, or guarded dependencies, or restrictions on the structure of source instances.
The consistency criterion that we identify in this paper is orthogonal and is particular to the kind of target constraints imposed by ShEx schemas.
In \cite{marnette:2010}, static analysis is used to test whether a target dependency is implied by a data exchange setting, these however rely on chase termination.
Consistency is an important problem in XML data exchange \cite{arenas:2010a} but the techniques developed for XML do not apply here.

\smallskip
\noindent
\textit{Value Invention, Relational to RDF Data Exchange.}
Value invention is used in the purely relational setting for generating null values.
Tools such as Clio \cite{fagin:2009} and ++Spicy \cite{marnette:2011} implement Skolem functions as concatenation of their arguments.
IRI value invention is considered by R2RML \cite{das:2011}, a W3C standard for writing customizable relational to RDF mappings.
The principle is similar to what we propose here.
A R2RML mapping allows to specify logical tables (i.e. very similar to left-hand-sides of source-to-target dependencies), and then how each row of a logical table is used to produce one or several triples of the resulting RDF graph.
Generating IRI values in the resulting graph is done using templates that specify how a fixed IRI part is to be concatenated with the values of some of the columns of the logical table.
R2RML does not allow to specify structural constraints on the resulting graph, therefore the problem of consistency is irrelevant there.
In \cite{sequeda:2012a}, a direct mapping that is a default automatic way for translating a relational database to RDF is presented.
The main difference with our proposal and with R2RML is that the structure of the resulting RDF graph is not customizable.
In \cite{boneva:2015b} we studied relational to graph data exchange in which the target instance is an edge labelled graph and source-to-target and target dependencies are conjunctions of nested regular expressions.
Such a framework raises a different kind of issues, among which is the materialization of a solution, as a universal solution is not necessarily a graph itself, but a graph pattern in which some edges carry regular expressions.
On the other hand, IRI value invention is not relevant in such framework.

\medskip
\noindent
\textbf{Organization.} In Section~\ref{sec:prelim} we present basic notions. In Section~\ref{sec:shex-fo} we show how ShEx schemas can be encoded using target dependencies. In Section \ref{sec:data-exchange} we formalize relational to RDF data exchange. In Section \ref{sec:consistency-des} we study the problem of consistency. And finally, in Section~\ref{sec:universal-solution} we investigate the existence of universal solutions. Conclusions and directions of future work are in Section~\ref{sec:concl-future-work}. 

\vspace{-1em}
\section{Preliminaries}
\label{sec:prelim}
\vspace{-1em}

\subsubsection*{First-order logic.}
A \emph{relational signature} $\Rsig$ (resp. \emph{functional signature} $\Fsig$) is a finite set of relational symbols (resp. functional symbols), each with fixed arity. A \emph{type symbol} is a relational symbol with arity one. A \emph{signature} is a set of functional and relational symbols. In the sequel we use $\Rsig$, resp. $\Fsig$, resp. $\Tsig$ for sets of relational, resp. functional, resp. type symbols.

We fix an infinite and enumerable domain $\Dom$ partitioned into three infinite subsets $\Dom = \IriSet\cup \LitSet \cup \BlankSet$ of IRIs, literals, and blank nodes respectively. Also, we assume an infinite subset $\NullLitSet \subseteq \LitSet$ of null literals. In general, by \emph{null} values we understand both null literals and blank nodes and we denote them by $\NullSet=\NullLitSet\cup\BlankSet$.

Given a signature $\FOsig = \Rsig \cup \Fsig$, a \emph{model} (or a \emph{structure}) of $\FOsig$ is a mapping $\Model$ that with any symbol $S$ in $\FOsig$ associates its interpretation $S^\Model$ s.t.:
\begin{itemize}
\item $R^\Model \subseteq \Dom^n$ for any relational symbol $R \in \Rsig$ of arity $n$;
\item $f^\Model: \Dom^n \to \Dom$, which is a total function for any function symbol $f \in \Fsig$ of arity $n$.
\end{itemize}

We fix a countable set $\Vars$ of variables and reserve the symbols $x, y, z$ for variables, and the symbols $\vect{x}$, $\vect{y}$, $\vect{z}$ for vectors of variables.
We assume that the reader is familiar with the syntax of first-order logic with equality and here only recall some basic notions. 
A \emph{term} over $\Fsig$ is either a variable in $\Vars$, or a constant in $\Dom$, or is of the form $f(\vect{x})$ where $f \in \Fsig$ and the length of $\vect{x}$ is equal to the arity of $f$; we remark that we do not allow nesting of function symbols in terms.
A \emph{dependency} is a formula of the form $\forall\vect{x}.\varphi\Rightarrow\exists\vect{y}.\psi$ and in the sequel, we often drop the universal quantifier, write simply $\varphi\Rightarrow\exists\vect{y}.\psi$, and assume that implicitly all free variables are universally quantified. 

The semantics of first-order logic formulas is captured with the \emph{entailment} relation $\Model,\nu \models \phi$ defined in the standard fashion for a model $\Model$, a first-order logic formula $\phi$ with free variables $\vect{x}$ and a valuation $\nu: \vect{x} \to \Dom$. The entailment relation is extended to sets of formulas in the canonical fashion: $\Model \models \{\varphi_1,\ldots,\varphi_n\}$ iff $\Model \models \varphi_i$ for every $i\in\{1,\ldots,k\}$.

\vspace{-1em}
\subsubsection*{Relational Databases.}
We model relational databases using relational structures in the standard fashion. For our purposes we are only concerned with functional dependencies, which include key constraints. Other types of constraints, such as inclusion dependencies and foreign key constraints, are omitted in our abstraction. 

A \emph{relational schema} is a pair $\Rsch = (\Rsig, \fdep)$ where $\Rsig$ is a relational signature and $\fdep$ is a set of \emph{functional dependencies} (fds) of the form $R:X\rightarrow Y$, where $R\in\Rsig$ is a relational symbol of arity $n$, and $X,Y\subseteq\{1,\ldots,k\}$. An fd $R:X\rightarrow Y$ is a short for the following formula
$\forall\vect{x},\vect{y}.\ R(\vect{x})\land R(\vect{y})\land \textstyle\bigwedge_{i\in X} (x_i=y_i) \Rightarrow \textstyle\bigwedge_{j\in Y} (x_j=y_j)$.
An \emph{instance} of $\Rsch$ is a model $\Inst$ of $\Rsig$ and we
say that $\Inst$ is \emph{valid} if $\Inst\models\fdep$. 
The \emph{active domain} $\dom(\Inst)$ of the instance $\Inst$ is the set of values from $\Dom$ that appear in $R^\Inst$ for some relational symbol $R$ in $\Rsig$. Unless we state otherwise, in the sequel we consider only instances that use only constants from $\LitSet\setminus\NullLitSet$.  

\vspace{-1em}
\subsubsection*{RDF Graphs and Shape Expressions Schemas.}
Recall that an \emph{RDF graph}, or \emph{graph} for short, is a set of triples in $(\IriSet \cup \BlankSet) \times \IriSet \times (\IriSet \cup \BlankSet \cup \LitSet)$.
The set of \emph{nodes} of the graph $G$ is the set of elements of $\IriSet \cup \BlankSet \cup \LitSet$ that appear on first or third position of a triple in $G$.

We next define the fragment of shape expression schemas that we consider, and that was called  $\mathsf{RBE_0}$ in \cite{}. Essentially, a ShEx is a collection of shape names, and each comes with a  definition consisting of a set of triple constraints. A triple constraint indicates a label of an outgoing edge, the shape of the nodes reachable with this label, and a multiplicity indicating how many instances of this kind of edge are allowed. We remark that the constraints expressible with this fragment of ShEx, if non-recursive, can also be captured by a simple fragment of SHACL with \textsc{and} operator only.

Formally, a \emph{multiplicity} is an element of $\{\ONE, \MAYBE, \MANY, \PLUS\}$ with the natural interpretation: $\ONE$ is exactly one occurrence, $\MAYBE$ stands for none or one occurrence, $\MANY$ stands for an arbitrary number of occurrences, and $\PLUS$ stands for a positive number of occurrences. A \emph{triple constraint} over a finite set of shape names $\Tsig$ is an element of $\IriSet \times (\Tsig \cup \{\LitSym\}) \times \{\ONE, \MAYBE, \MANY, \PLUS\}$, where $\LitSym$ is an additional symbol used to indicate that a node is to be a literal. Typically, we shall write a triple constraint $(p, T, \mu)$ as $p\dbl T^\mu$. Now, a \emph{shape expressions schema}, or \emph{ShEx} schema for short, is a couple $\ShexSch = (\Tsig, \delta)$ where $\Tsig$ is a finite set of shape names, and $\delta$ is shape definition function that maps every symbol $T\in\Tsig$ to a finite set of triple constraints over $\Tsig$ such that for every shape name $T$ and for every IRI $p$, $\delta(T)$ contains at most one triple constraint using $p$.

For a finite set $\Tsig$ of shape names, a \emph{$\Tsig$-typed} graph is a couple $(G, \typing)$ where $G$ is a graph and $\typing$ is a mapping from the nodes of $G$ into $2^{\Tsig \cup \{\LitSym\}}$ that with every node of $G$ associates a (possibly empty) set of types.
Let $\ShexSch = (\Tsig, \delta)$ be a ShEx schema.
The $\Tsig$-typed graph $(G, \typing)$ is \emph{correctly typed} w.r.t.\ $\ShexSch$ if it satisfies the constraints defined by $\delta$ i.e., for any node $n$ of $G$:
\begin{itemize}
\item if $\LitSym \in \typing(n)$, then $n \in \LitSet$;
\item if $T \in \typing(n)$ then $n \in \IriSet$ and for every  $p\dbl S^\mu$ in $\delta(T)$ we have that (1) for any triple $(n, p, m)$ in $G$, $S$ belongs to $\typing(m)$, and (2) if $K$ is the set of triples in $G$ whose first element is $n$ and second element is $p$, then the cardinality of $K$ is bounded by $\mu$ i.e., $|K|=1$ if $\mu= \ONE$, $|K|\leq 1$ if $\mu = \MAYBE$, and $|K|\geq 1$ if $\mu = \PLUS$ (there is no constraint if $\mu=\MANY$).
\end{itemize}
For instance, a correct typing for the graph in Figure~\ref{fig:rdf} assigns the type $\Bug$ to the nodes $\mathsf{bug\col 1}$, $\mathsf{bug\col 2}$, $\mathsf{bug\col 3}$, and $\mathsf{bug\col 4}$; the type $\User$ to the nodes $\mathsf{emp}\col\mathsf{jose}$, $\mathsf{user}\col\mathsf{edith}$, and $\mathsf{anon}\col\mathsf{3}$; and $\LitSym$ to every literal node.




\vspace{-1em}
\section{ShEx Schemas as Sets of Dependencies}
\label{sec:shex-fo}
\vspace{-1em}
In this section we show how to express a ShEx schema $\ShexSch = (\Tsig, \delta)$ using dependencies. 

First, we observe that any $\Tsig$-typed graph can be easily converted to a relational structure over the relational signature $\GTsig = \{\Triple\} \cup \Tsig \cup\{\LitSym\}$, where $\Triple$ is a ternary relation symbol for encoding triples, and $\Tsig \cup\{\LitSym\}$ are monadic relation symbols indicating node types (details in Appendix~\ref{app:shex-fo}). 
Consequently, in the sequel, we may view a $\Tsig$-typed graph as the corresponding relational structure (or even a relational database over the schema $(\GTsig, \emptyset)$).

Next, we define auxiliary dependencies for any two $T,S\in\Tsig$ and any $p\in\IriSet$ 
\begin{align*}
  \tconstr(T, S, p) \colonequals{} & T(x) \land \Triple(x, p, y) \Rightarrow S(y)\\
  \multleastone(T, p) \colonequals{} & T(x) \Rightarrow \exists y. \Triple(x, p, y)\\
  \multmostone(T, p) \colonequals{} & T(x) \land \Triple(x, p, y) \land \Triple(x, p, z) \Rightarrow y = z
\end{align*}
We point out that in terms of the classical relational data exchange, $\tconstr$ and $\multleastone$ are \emph{tuple generating dependencies} (\emph{tgd}s), and $\multmostone$ is an \emph{equality generating dependency} (\emph{egd}). We capture the ShEx schema $\ShexSch$ with the following set of dependencies:
\begin{align*}
\shexdep = {} 
  &\{\tconstr(T,S,p)\mid T\in\Tsig,\ p\dbl{}S^\mu\in\delta(T)\}\cup{}\\
  &\{\multleastone(T, p) \mid T\in\Tsig,\ p\dbl{}S^\mu\in\delta(T),\ \mu\in\{\ONE,\PLUS\} \}\cup{}\\
  &\{\multmostone(T, p) \mid T\in\Tsig,\ p\dbl{}S^\mu\in\delta(T),\  \mu\in\{\ONE,\MAYBE\} \}.
\end{align*}
\begin{lemma}
  \label{lem:shex-as-dependencies}
  For every ShEx schema $\ShexSch = (\Tsig, \delta)$ and every $\Tsig$-typed RDF graph $(G, \typing)$,
  $(G, \typing)$ is correctly typed w.r.t.\ $\ShexSch$ iff $(G, \typing) \models \shexdep$.
\end{lemma}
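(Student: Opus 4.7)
The plan is to prove each direction by a pointwise unpacking of the two definitions, relying on the fact that both the correctness condition and the formulas in $\shexdep$ are quantified over the same data (a type assignment at a node, together with a triple constraint $p\dbl S^\mu$ in the shape definition of that type). Since $\shexdep$ is defined as a finite union indexed by triple constraints, it suffices to show that each triple constraint in $\delta(T)$ contributes exactly the logical content required for correctness at every node carrying the type $T$.

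For the forward direction, I would assume $(G,\typing)$ is correctly typed and verify each axiom scheme. Given any $T\in\Tsig$ and any $p\dbl S^\mu\in\delta(T)$: to check $\tconstr(T,S,p)$, take a valuation $x\mapsto a,\ y\mapsto b$ satisfying the antecedent, so $T\in\typing(a)$ and $(a,p,b)\in G$; condition (1) of correctness immediately gives $S\in\typing(b)$. For $\multleastone(T,p)$ (added only when $\mu\in\{\ONE,\PLUS\}$), condition (2) guarantees $|K|\ge 1$, supplying the required witness $y$. For $\multmostone(T,p)$ (added only when $\mu\in\{\ONE,\MAYBE\}$), condition (2) gives $|K|\le 1$, so any two such $y,z$ coincide.

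For the backward direction, I would assume $(G,\typing)\models\shexdep$ and verify the second bullet of the correctness definition at an arbitrary $n$ with $T\in\typing(n)$ and arbitrary $p\dbl S^\mu\in\delta(T)$. Condition (1) falls out of $\tconstr(T,S,p)$ applied to every $(n,p,m)\in G$. For condition (2), I split on $\mu$: when $\mu=\ONE$ both $\multleastone$ and $\multmostone$ are present and together pin $|K|=1$; when $\mu=\MAYBE$ only $\multmostone$ is present and yields $|K|\le 1$; when $\mu=\PLUS$ only $\multleastone$ is present and yields $|K|\ge 1$; when $\mu=\MANY$ no axiom is generated, matching the absence of a cardinality constraint. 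The mapping between multiplicity symbols and the pair of axioms produced is therefore a bijection onto the cardinality constraints used in the definition of correct typing.

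The only subtlety, and the one step I would flag as the main obstacle, is the first bullet of the correctness definition: the sort discipline that $\LitSym\in\typing(n)$ forces $n\in\LitSet$ and that $T\in\typing(n)$ forces $n\in\IriSet$. These sort conditions are not expressible by the tgds and egds in $\shexdep$ alone, so I would either (i) invoke the relational encoding from Appendix~\ref{app:shex-fo}, arguing that the translation $(G,\typing)\mapsto$ relational structure over $\GTsig$ is only defined on pairs that already respect this sort discipline, so the conditions hold by construction on both sides of the iff, or (ii) make explicit that ``$\Tsig$-typed graph'' is tacitly taken modulo the sort discipline in the lemma's statement. With this convention in place, the pointwise argument above completes the equivalence.
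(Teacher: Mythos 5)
Your proposal is correct and follows essentially the same route as the paper's proof: both directions come down to pairing each triple constraint $p\dbl S^\mu$ in $\delta(T)$ with the corresponding members of $\shexdep$ and matching the multiplicity $\mu$ against the presence or absence of $\multleastone(T,p)$ and $\multmostone(T,p)$; that you argue directly where the paper argues by contrapositive and by contradiction is immaterial. The one point where you go beyond the paper is the sort discipline in the first bullet of the correctness definition, and your instinct to flag it is sound: $\shexdep$ indeed cannot express that $\LitSym\in\typing(n)$ forces $n\in\LitSet$ or that $T\in\typing(n)$ forces $n\in\IriSet$, and the paper's own proof dismisses this with the claim that a blank node ``is potentially an IRI,'' which sits uneasily with the partition $\Dom=\IriSet\cup\LitSet\cup\BlankSet$; your option (i) --- restricting attention to typed graphs whose relational encodings lie in the domain of $\rdffrominstance$, i.e.\ $\LitSym^\Inst\subseteq\LitSet$ and $T^\Inst\subseteq\IriSet$ as in Lemma~\ref{lem:one-to-one-correspondance} --- is the cleaner repair and makes the statement literally true.
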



\vspace{-1em}
\section{Relational to RDF Data Exchange}
\label{sec:data-exchange}
\vspace{-1em}
In this section, we present the main definitions for data exchange.

\vspace{-0.5em}
\begin{definition}[Data exchange setting]
A \emph{relational to RDF data exchange setting} is a 
tuple $\DES = (\Rsch, \ShexSch, \stdep, \Fsig, \Fint)$ where 
$\Rsch = (\Rsig, \fdep)$ is a source relational schema, 
$\ShexSch = (\Tsig, \delta)$ is a target ShEx schema, 
$\Fsig$ is a function signature, 
$\Fint$ as an interpretation for $\Fsig$ that with every function symbol $f$ in $\Fsig$ of arity $n$ associates a function from $\Dom^n$ to $\IriSet$, and
$\stdep$ is a set of \emph{source-to-target tuple generating dependencies}, clauses of the form $\forall\vect{x}. \varphi \Rightarrow \psi$, where $\varphi$ is a conjunction of atomic formulas over the source signature $\Rsig$ and $\psi$ is a conjunction of atomic formulas over the target signature 
$\GTsig \cup \Fsig$. Furthermore, we assume that all functions in $\Fint$ have disjoint ranges i.e., for $f_1,f_2\in\Fint$ if $f_1\neq f_2$, then $\ran(f_1)\cap\ran(f_2)=\emptyset$.
\end{definition}

\vspace{-1em}
\begin{definition}[Solution]
Take a data exchange setting $\DES = (\Rsch, \ShexSch, \stdep, \Fsig, \Fint)$, and let $\Inst$ be a valid instance of $\Rsch$. Then, a \emph{solution} for $\Inst$ w.r.t.\  $\DES$ is any 
$\Tsig$-typed graph $\Ginst$ such that $\Inst \cup \Ginst \cup \Fint \models \stdep$ and $\Ginst \models \shexdep$.
\end{definition}

\vspace{-0.5em}
A \emph{homomorphism} $h: \Inst_1 \to \Inst_2$ between two relational structures $\Inst_1, \Inst_2$ of the same relational signature $\Rsig$ is a mapping from $\dom(\Inst_1)$ to $\dom(\Inst_2)$ that 1) preserves the values of non-null elements i.e., $h(a)=a$ whenever $a\in\dom(\Inst_1)\setminus\NullSet$, and 2) for every $R\in\Rsig$ and every $\vect{a}\in R^{I_1}$ we have $h(\vect{a})\in R^{I_2}$, where $h(\vect{a})=(h(a_1),\ldots,h(a_n))$ and $n$ is the arity of $R$.

\vspace{-0.5em}
\begin{definition}[Universal Solution]
  Given a data exchange setting $\DES$ and a valid source instance $\Inst$, a solution $\Ginst$ for $\Inst$ w.r.t.\ $\DES$ is \emph{universal}, if for any solution $J'$ for $\Inst$ w.r.t.\ $\DES$ there exists a homomorphism $h: J\to J'$.
\end{definition}

\vspace{-0.5em}
As usual, a solution is computed using the chase.
We use a slight extension of the standard chase (explained in the appendix) in order to handle function terms, which in our case is simple (compared to e.g. \cite{arenas:2013a}) as the interpretation of function symbols is given.


\vspace{-1em}
\section{Consistency}
\label{sec:consistency-des}
\vspace{-1em}

\begin{definition}[Consistency]
  A data exchange setting $\DES$ is \emph{consistent} if every valid source instance admits a solution. 
\end{definition}

We fix a relational to RDF data exchange setting $\DES = (\Rsch, \ShexSch, \stdep, \Fsig, \Fint)$ and let $\ShexSch = (\Tsig, \delta)$. We normalize source-to-target tuple generating dependencies so that their right-hand-sides use exactly one $\Triple$ atom and at most two type assertions on the subject and the object of the triple; such normalization is possible as our st-tgds do not use existential quantification. In this paper, we restrict our investigation to completely typed st-tgds having both type assertions, and therefore being of the following form 
\[
\forall \vect{x}.\ \varphi \Rightarrow \Triple(s,p,o) \land T_s(s) \land T_o(o),
\]
where $s$ is the \emph{subject term}, $T_s$ is the \emph{subject type}, $p\in\IriSet$ is the \emph{predicate}, $o$ is the \emph{object term}, and $T_o$ is the \emph{object type}. Because the subject of a triple cannot be a literal, we assume that $s=f(\vect{y})$ for $f\in\Fsig$ and for $\vect{y}\subseteq\vect{x}$, and $T_s\in\Tsig$. As for the object, we have two cases: 1) the object is an IRI and then $o=g(\vect{z})$ for $g\in\Fsig$ and for $\vect{z}\subseteq\vect{x}$, and $T_o\in\Tsig$, or 2) the object is literal $o=z$ for $z\in\vect{x}$ and $T_o=\LitSym$. 
Moreover, we assume consistency with the target ShEx schema $\ShexSch$ i.e., for any st-tgd in $\stdep$ with source type $T_s$, predicate $p$, and object type $T_o$ we have $p\dbl{}T_o^\mu\in\delta(T_s)$ for some multiplicity $\mu$. Finally, we assume that every IRI constructor in $\Fsig$ is used with a unique type in $\Tsig$. When all these assumptions are satisfied, we say that the source-to-target tuple generating dependencies are \emph{fully-typed}.

While the st-tgds in Example~\ref{ex:data-exchange} are not fully-typed, an equivalent set of fully-typed dependencies can be easily produced if additionally appropriate foreign keys are given. For instance, assuming the foreign key constraint $\RBug[\mathit{uid}]\subseteq\RUser[\mathit{uid}]$, the first rule with $\RBug$ on the left-hand-side is equivalent to 
\begin{small}
\begin{align*}
  &\RBug(b,d,u)  \Rightarrow\Triple(\bToI(b),\descr,d) \land \Bug(\bToI(b)) \land \LitSym(d)\\
  &\RBug(b,d,u)  \Rightarrow \Triple(\bToI(b),\rep,\pToI(u)) \land \Bug(\bToI(b)) \land \User(\pToI(u)) 
\end{align*}%
\end{small}%

Now, two st-tgds are \textit{contentious} if both use the same IRI constructor $f$ for their subjects and have the same predicate, hence the same subject type $T_s$ and object type $T_o$, and $p\dbl{}T_o^\mu\in\delta(T_s)$ with $\mu=\ONE$ or $\mu=\MAYBE$. We do not want two contentious st-tgds to produce two triples with the same subject and different objects. Formally, take two contentious st-tgds $\sigma_1$ and $\sigma_2$ and assume they have the form (for $i\in\{1,2\}$, and assuming $\vect{x}_1, \vect{x}_2, \vect{y}_1,\vect{y}_2$ are pairwise disjoint)
\[
\sigma_i = \forall \vect{x}_i,\vect{y}_i.\ \varphi_i(\vect{x}_i,\vect{y}_i) \Rightarrow \Triple(f(\vect{x}_i),p,o_i) \land T_s(f(\vect{x}_i)) \land T_o(o_i).
\]
The st-tgds $\sigma_1$ and $\sigma_2$ are \emph{functionally overlapping} if for every valid instance $I$ of $\Rsch$
\[
I\cup \Fint\models \forall \vect{x}_1,\vect{y}_1,\vect{x}_2,\vect{y}_2.\ 
\varphi_1(\vect{x}_1,\vect{y}_1)\land
\varphi_2(\vect{x}_2,\vect{y}_2)\land
\vect{x}_1=\vect{x}_2\Rightarrow 
o_1=o_2.
\]
Finally, a data-exchange setting is \emph{key-covered} if every pair of its contentious st-tgds is functionally overlapping. Note that any single st-tgd may be contentious with itself. 

\begin{theorem}
  \label{thm:key-covered-fully-typed-iff-consistent} 
  A fully-typed data exchange setting is consistent if and only if it is key-covered.
\end{theorem}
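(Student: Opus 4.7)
The plan is to treat the two implications separately. For the $(\Rightarrow)$ direction I argue by contrapositive: if the setting fails to be key-covered, I exhibit a source instance for which no target $\Tsig$-typed graph can satisfy both $\stdep$ and $\shexdep$. For the $(\Leftarrow)$ direction I construct a solution for every valid source instance by first chasing with $\stdep$ and then completing the result with fresh nulls to fulfil the required multiplicity lower bounds.

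For $(\Rightarrow)$, suppose $\DES$ is not key-covered, so there exist contentious st-tgds $\sigma_1,\sigma_2$ (possibly equal) that are not functionally overlapping. Unfolding this negation yields a valid source instance $I$ and valuations $\nu_1,\nu_2$ such that $\nu_1(\vect{x}_1)=\nu_2(\vect{x}_2)$, both bodies $\varphi_1$ and $\varphi_2$ are satisfied in $I\cup\Fint$, yet $\nu_1(o_1)\ne\nu_2(o_2)$. Because $\sigma_1$ and $\sigma_2$ share the subject constructor $f$ and predicate $p$, every solution $\Ginst$ for $I$ must contain both triples $(f(\nu_1(\vect{x}_1)),p,\nu_1(o_1))$ and $(f(\nu_1(\vect{x}_1)),p,\nu_2(o_2))$; contentiousness forces $p\dbl{}T_o^\mu\in\delta(T_s)$ with $\mu\in\{\ONE,\MAYBE\}$, so $\multmostone(T_s,p)\in\shexdep$ demands these objects to be equal, a contradiction.

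For $(\Leftarrow)$, fix a valid $I$ and compute $\Ginst_0$ by a single pass of the chase of $I$ with $\stdep$, evaluating every function term through $\Fint$; since fully-typed right-hand sides are quantifier-free, this stage introduces no nulls and $I\cup \Ginst_0\cup\Fint\models\stdep$ by construction. To verify the $\tconstr$ and $\multmostone$ fragments of $\shexdep$ on $\Ginst_0$, observe that every triple $(x,p,y)\in \Ginst_0$ has a uniquely determined subject type $T_s$: disjointness of constructor ranges together with the fully-typed assumption that each constructor is tied to a single shape forces exactly one type on $x$. The schema then pins down the unique $p\dbl{}T_o^\mu\in\delta(T_s)$, and any st-tgd producing the triple is fully-typed and so also asserts $T_o(y)$, establishing $\tconstr$. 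For $\multmostone$, two co-subject, co-predicate triples with $\mu\in\{\ONE,\MAYBE\}$ come from contentious st-tgds which, being key-covered, are functionally overlapping, so their objects coincide. Finally, I iteratively extend $\Ginst_0$ to $\Ginst$ by adding a fresh null $\bot$ (blank node if $T_o\in\Tsig$, null literal if $T_o=\LitSym$) together with the triple $(x,p,\bot)$ and the assertion $T_o(\bot)$ for each type-$T_s$ node $x$ and each $p\dbl{}T_o^\mu\in\delta(T_s)$ with $\mu\in\{\ONE,\PLUS\}$ still uncovered. Each fresh null is introduced exactly once and initially has a single outgoing $p$-edge, so $\multmostone$ and $\tconstr$ are preserved and the new triple satisfies $\multleastone$. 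The completion may iterate indefinitely when $\ShexSch$ is recursive, but the limit $\Ginst$ is a valid, possibly infinite, $\Tsig$-typed solution, which suffices for consistency. The main obstacle is the $\tconstr$ verification on $\Ginst_0$: all three fully-typed hypotheses --- typed right-hand sides, disjoint constructor ranges, and unique shape per constructor --- must be combined to guarantee that every generated triple respects the schema's type flow, and dropping any one of them breaks the argument.
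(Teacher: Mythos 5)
Your forward direction and the egd half of your backward direction track the paper's own argument closely: consistency fails exactly when two contentious st-tgds that are not functionally overlapping fire on a witnessing source instance and force two distinct objects under some $\multmostone(T_s,p)$ in $\shexdep$, and conversely key-coveredness guarantees that the chase of $\Inst$ with $\stdep$ never makes such an egd applicable. One step you gloss over (and which the paper also invokes only in passing) is that to pass from ``same subject IRI $f(\nu_1(\vect{x}_1))=f(\nu_2(\vect{x}_2))$'' to the premise $\vect{x}_1=\vect{x}_2$ of functional overlapping you need the IRI constructors to be injective, not merely to have pairwise disjoint ranges; without that, contentiousness of the two st-tgds does not by itself let you apply the key-covered hypothesis.

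The genuine gap is in your completion step. You saturate the $\multleastone$ constraints by attaching a \emph{fresh} null for every still-unsatisfied required property, and you concede that for a recursive schema (e.g.\ $T\to\{p\dbl T^\ONE\}$) this never terminates, settling for an infinite limit graph ``which suffices for consistency.'' It does not: a solution here is a materializable target instance, and the theorem's content is precisely that key-coveredness suffices \emph{without} any acyclicity assumption on $\ShexSch$, so the recursive case is the one you must handle finitely (if infinite models were allowed, the $\multleastone$ constraints could essentially always be satisfied by unfolding and the completion step would be vacuous). The paper closes this by folding the unfolding: it introduces a single blank node $b^{T_s,p}$ for each pair of a type $T_s$ and a property $p$ required by $\delta(T_s)$ with multiplicity $\ONE$ or $\PLUS$, types it with the object type of the corresponding triple constraint (the set $J_1$), wires these finitely many blanks to one another to satisfy their own required properties (the set $J_2$, which produces self-loops and cycles instead of infinite chains), and finally attaches every node of the chased instance that is missing a required $p$-edge to $b^{T_s,p}$ (the set $J_3$). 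Each completed node receives exactly one new $p$-edge, so the egds remain satisfied. Replacing your fresh-null iteration with this finite gadget is what your proof is missing.
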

The sole reason for the non-existence of a solution for a source instance $\Inst$ is a violation of some egd in $\shexdep$. The key-covered property ensures that such egd would never be applicable. Intuitively, two egd-conflicting objects $o_1$ and $o_2$ are necessarily generated by two contentious st-tgds. The functional-overlapping criterion guarantees that the terms $o_1$ and $o_2$ are ``guarded'' by a primary key in the source schema, thus cannot be different.
\vspace{-0.5em}
\begin{theorem}
  \label{thm:key-covered-decidable}
  It is decidable whether a fully-typed data exchange setting is key-covered.
\end{theorem}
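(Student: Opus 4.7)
The plan is to reduce the decision problem to a finite number of instances of the classical implication problem for functional dependencies over conjunctive queries, which is known to be decidable (in polynomial time) via the chase with egds.

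First, I observe that $\stdep$ is finite, so there are only $O(|\stdep|^2)$ ordered pairs of st-tgds, and whether a given pair is contentious can be checked purely syntactically: it depends only on the subject IRI constructor $f$, the predicate $p$, the inferred subject and object types, and the multiplicity $\mu$ of the matching triple constraint in $\delta$. Hence it suffices to decide functional overlapping for a single contentious pair $(\sigma_1,\sigma_2)$, which I write in the normalized form
$$\sigma_i = \varphi_i(\vect{x}_i,\vect{y}_i) \Rightarrow \Triple(f(\vect{x}_i),p,o_i) \land T_s(f(\vect{x}_i)) \land T_o(o_i),$$
with variable vectors pairwise disjoint.

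The core step is to build the tableau instance $I_\varphi$ over $\Rsig$ corresponding to $\varphi := \varphi_1(\vect{x},\vect{y}_1) \land \varphi_2(\vect{x},\vect{y}_2)$, obtained by identifying $\vect{x}_1$ and $\vect{x}_2$ as a single vector $\vect{x}$, replacing each remaining variable by a distinct labelled null, and turning each atom into a tuple. I then chase $I_\varphi$ with $\fdep$; since $\fdep$ consists of egds, the chase terminates and yields an equivalence relation $\equiv$ on the variables, together with a possible clash of distinct source constants certifying unsatisfiability. By standard results on the egd chase, for any two terms $t_1,t_2$ built from the variables, $\fdep\cup\{\varphi\} \models t_1=t_2$ if and only if $t_1 \equiv t_2$ in the chased tableau (or the tableau is inconsistent).

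Finally I decide functional overlapping by case analysis on $T_o$. If $T_o = \LitSym$, each $o_i$ is a variable and the pair is functionally overlapping iff $o_1 \equiv o_2$. If $T_o \in \Tsig$ then $o_i = g_i(\vect{z}_i)$: when $g_1 \neq g_2$, the disjoint-range assumption on $\Fint$ forces $o_1 \neq o_2$ identically, so functional overlapping holds iff $\varphi$ is unsatisfiable modulo $\fdep$; when $g_1 = g_2$, functionality of the interpretation reduces the check to the componentwise equalities $\vect{z}_1 \equiv \vect{z}_2$, each again resolved by the same chase. The main obstacle I foresee is in this IRI sub-case: since the interpretations in $\Fint$ are \emph{a priori} arbitrary functions, reducing equality of function terms to equalities among their source arguments relies essentially on the disjoint-range assumption, and for the criterion to be complete rather than merely sufficient one implicitly needs IRI constructors to be injective on source inputs — the standard working assumption for such constructors.
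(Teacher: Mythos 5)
Your argument is correct in substance and rests on the same mathematical core as the paper's, but it takes a more self-contained route. The paper reduces key-coveredness to the functional dependency propagation problem $\fdprop(\Rsch,\Vsig,\fdep^{\Vsig})$: for each contentious pair $(\sigma_1,\sigma_2)$ it builds a view $V_{\sigma_1,\sigma_2}$ projecting the bodies onto the subject arguments $\vect{x}$ and the object-term arguments $\vect{z}$, posts the functional dependency $\{1,\ldots,m\}\to\{m+1,\ldots,m+n\}$ on that view, and then invokes the decidability of fd propagation from~\cite{klug:1982a} as a black box. You instead unfold that black box: you chase the frozen tableau of $\varphi_1\land\varphi_2$ (with $\vect{x}_1$ and $\vect{x}_2$ identified) by the egds in $\fdep$ and read off the implied equalities, which is exactly the algorithm underlying fd propagation for conjunctive-query views. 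What your version buys is a direct, elementary decision procedure with an evident termination and complexity bound (quadratically many pairs, each settled by a terminating egd chase plus a frozen-tableau counterexample for completeness); what the paper's version buys is a one-line appeal to a known theorem. Two points in your write-up are in fact more careful than the paper's appendix: you handle the case of distinct object constructors $g_1\neq g_2$ (reducing it, via the disjoint-range assumption, to unsatisfiability of the joined body modulo $\fdep$), whereas the paper simply asserts that contentious st-tgds share their object function symbol; and you flag that completeness of the step from $g(\vect{z}_1)=g(\vect{z}_2)$ back to $\vect{z}_1=\vect{z}_2$ needs injectivity of the IRI constructors. That caveat is not a gap in your proof so much as an exposure of a hypothesis the paper itself uses implicitly: its proofs of Theorems~\ref{thm:key-covered-fully-typed-iff-consistent} and~\ref{thm:key-covered-decidable} both argue from the injectivity of $f^{\Fint}$ and $g^{\Fint}$, even though the definition of a data exchange setting only requires disjoint ranges. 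So your proposal is sound under the same working assumption the paper makes, and it makes that assumption visible where the paper leaves it tacit.
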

The proof uses a reduction to the problem of functional dependency propagation \cite{klug:1982a}.


\vspace{-1em}
\section{Universal Solution}
\label{sec:universal-solution}
\vspace{-1em}

In this section, we identify conditions that guarantee the existence of a universal solution. Our results rely on the existence of a universal solution for sets of weakly-acyclic sets of dependencies for relational data exchange~\cite{fagin:2005a}. As the tgds and egds that we generate are driven by the schema (cf. Section~\ref{sec:shex-fo}), we introduce a restriction on the ShEx schema that yields weakly-acyclic sets of dependencies, and consequently, guarantees the existence of universal solution.

The \emph{dependency graph} of a ShEx schema $\ShexSch = (\Tsig, \delta)$ is the directed graph whose set of nodes is $\Tsig$ and has an edge $(T,T')$ if $T'$ appears in some triple constraint $p::T'^\mu$ of  $\delta(T)$. There are two kinds of edges: \emph{strong edge}, when the multiplicity $\mu \in \{\ONE,\PLUS\}$, and \emph{weak edge}, when $\mu \in \{\MANY,\MAYBE\}$. The schema $\ShexSch$ is \emph{strongly-recursive} if its dependency graph contains a cycle of strong edges only, and is \emph{weakly-recursive} otherwise. Take for instance the following extension of the ShEx schema from Example~\ref{ex:data-exchange}:
\begin{align*}
  \User \to{}& \{ \name\dbl \LitSym^\ONE, \mel\dbl \LitSym^\ONE , \phone \dbl \LitSym^\MAYBE\}\\
  \Bug \to{}& \{ \rep \dbl \User^\ONE, \descr \dbl \LitSym^\ONE, \rel \dbl \Bug^\MANY, \repr \dbl \Emp^\MAYBE\}\\
  \Emp \to{}& \{ \name\dbl \LitSym^\ONE, \prep\dbl \Test^\PLUS \}\\
  \Test \to{}& \{ \grp \dbl \Bug^\PLUS\}
\end{align*}
The dependency graph of this schema, presented in Figure~\ref{fig:dg}. contains two cycles but neither of them is strong. Consequently, the schema is weakly-recursive (and naturally so is the ShEx schema in Example~\ref{ex:data-exchange}).
\begin{figure}[t]
	\centering
        \vspace{-0.5cm}
	\begin{tikzpicture}[>=latex]
	\node (Bug) at (-6.5, 0.5) {\Bug};
	\node (User)[left=1cm of Bug] {\User};
	\node (Emp)[right=1.9cm of Bug]  {\Emp};
        \node (Ta)[above right=0.7cm of Bug] {\Test};
	
	\draw[loop,bend angle=45]  (Bug) edge[->,dashed] (Bug); 
	
	\draw (Emp) [bend angle=45] edge[->] (Ta); 
	\draw (Ta) [bend angle=45] edge[->] (Bug); 
	
	\draw (Bug) edge[->]
	(User);
	\draw (Bug) edge[->,dashed] 
	(Emp);
	
	
	
	
      \end{tikzpicture}
      \caption{Dependency graph with dashed weak edges and plain strong edges \label{fig:dg}\vspace{-2em}}
\end{figure}
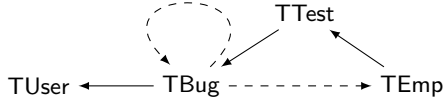

As stated above, a weakly-recursive ShEx schema guarantees a weakly-acyclic set of dependencies and using results from~\cite{fagin:2005a} we get
\begin{proposition}
\label{prop:universal-solution-weakly-recursive}
   Let $\DES = (\Rsch, \ShexSch, \stdep, \Fsig, \Fint)$ be a data exchange setting and $\Inst$ be a valid instance of $\Rsch$.
   If $\ShexSch$ is weakly recursive, then every chase sequence of $\Inst$ with $\stdep \cup \shexdep$ is finite, and either every chase sequence of $\Inst$ with $\stdep$ fails, or every such chase sequence computes a universal solution of $\Inst$ for $\DES$.
\end{proposition}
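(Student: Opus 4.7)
The plan is to establish termination of every chase sequence of $\Inst$ with $\stdep \cup \shexdep$ by a direct rank argument on types, then invoke the standard chase theorem of \cite{fagin:2005a} to obtain the dichotomy between failure and universal solution. Since $\ShexSch$ is weakly recursive, the sub-graph of its dependency graph restricted to strong edges is a DAG, so I would define a rank function $\rho : \Tsig \to \mathbb{N}$ with $\rho(T)$ equal to the length of the longest strong-edge path starting at $T$; in particular $\rho(T) \leq |\Tsig|$ and every strong edge $T \to S$ satisfies $\rho(T) > \rho(S)$.

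The key lemma I would prove is a \emph{unique-typing invariant}: in any reachable chase state, every non-literal node has exactly one shape type in $\Tsig$, and every literal has type $\LitSym$. This follows from the fully-typed restrictions on $\stdep$ (Section~\ref{sec:consistency-des}): each IRI constructor is used with a unique type, so st-tgds assign a single type to each node they create; a null $\bot$ introduced by $\multleastone(T, p)$ on parent $x$ is the sole object of the triple $(x, p, \bot)$, and the only $\tconstr$ rule firing on that triple uses the unique type $T$ of $x$, so $\bot$ receives the single type $S$ where $p\dbl S^\mu \in \delta(T)$; and $\multmostone$ either equates two constants with the same unique type (a no-op, by schema consistency of contentious st-tgds) or fails. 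The uniqueness clause of ShEx (at most one triple constraint per predicate in each shape) is essential.

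Given the invariant, any null $\bot$ created by $\multleastone(T, p)$ on $x$ has type $S$ with $\mu \in \{\ONE, \PLUS\}$, so $T \to S$ is a strong edge and $\rho(\mathrm{type}(\bot)) < \rho(\mathrm{type}(x))$. Ranks therefore strictly decrease along any null-generation chain, bounding its length by $|\Tsig|$; combined with a bounded branching factor (at most $|\delta(T)|$ strong triple constraints per type), this bounds the total number of nulls and hence the length of every chase sequence. The universal-solution conclusion then follows from \cite{fagin:2005a}: a successfully terminating chase sequence computes a universal solution, and if some chase sequence fails then no solution exists for $\Inst$ under $\DES$, so all chase sequences fail.

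The main obstacle is establishing the unique-typing invariant, because standard weak-acyclicity of $\stdep \cup \shexdep$ does not actually hold for weakly-recursive schemas in general: the position graph may have cycles through $(\Triple, 3)$ formed by $\tconstr$ rules of different subject types that share a common predicate. Unique typing, enabled by the fully-typed hypothesis and the single-type-per-IRI-constructor condition, is precisely what renders those position-graph cycles harmless and lets the rank argument bound every chase sequence.
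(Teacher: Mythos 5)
Your proof is correct in substance, but it takes a genuinely different route from the paper, and in doing so it repairs a real weakness. The paper's entire argument for this proposition is the one sentence preceding it: a weakly-recursive schema ``guarantees a weakly-acyclic set of dependencies,'' after which \cite{fagin:2005a} is invoked; no detailed proof appears in the appendix. Your closing observation is the important one: under the standard position-graph definition of weak acyclicity, that claim is literally false, because all predicates share the single position $(\Triple,3)$. Already for the paper's own example schema, $\multleastone(\Bug,\descr)$ contributes the special edge $(\Bug,1)\rightarrow^{*}(\Triple,3)$ while $\tconstr(\Bug,\Bug,\rel)$ contributes the ordinary edge $(\Triple,3)\rightarrow(\Bug,1)$, giving a cycle through a special edge even though the schema is weakly recursive. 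Refining the position graph by splitting $\Triple$ into one binary relation per predicate constant is not enough either, since two shapes sharing a predicate with different object types still create spurious cycles. Your rank function on the strong-edge DAG together with the unique-typing invariant (which is exactly what the fully-typed and one-type-per-constructor assumptions buy) is therefore not an alternative proof so much as the missing argument: it shows that the only $\tconstr$ rule that can fire on a freshly created null is the one determined by the unique type of its parent, so null-generation chains strictly descend in rank and every chase sequence is finite; the dichotomy then follows from the chase theorem as you say. What the paper's route would buy, if it worked, is a purely syntactic citation; what yours buys is an argument that actually closes the gap.

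Two points deserve more care in a full write-up. First, your treatment of $\multmostone$ steps as ``a no-op or failure'' is too quick: a null created by $\multleastone(T,p)$ at an IRI node $a$ can later be merged with a constructor-generated object when an st-tgd adds a second $p$-triple from $a$; this is a genuine substitution step, and you must check that it preserves the unique-typing invariant (it does, because the st-tgd's object type must coincide with the unique object type of $p$ in $\delta(T)$) and that such merges cannot be interleaved with tgd firings indefinitely (they cannot, since each merge eliminates a null and the number of non-null-creating tgd firings is bounded by the facts expressible over the active domain plus the bounded set of nulls). Second, the proposition as stated does not explicitly restrict to fully-typed settings, whereas your invariant genuinely needs that hypothesis; you should state it, noting that Section~\ref{sec:consistency-des} restricts the whole investigation to that case.
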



\vspace{-2em}
\section{Conclusion and Future Work}
\label{sec:concl-future-work}
\vspace{-1em}
We presented a preliminary study of the consistency problem for relational to RDF data exchange in which the target schema is ShEx.
Consistency is achieved by fully-typed and key-covered syntactic restriction of st-tgds.
An open problem that we plan to investigate is consistency when the fully-typed restriction is relaxed; we believe that it is achievable if we extend the definition of contentious st-tgds.
Another direction of research is to consider a larger subset of ShEx.
Finally, we plan to extend our framework to typed literals which are not expected to bring fundamental difficulties but are essential for practical applications.
\vspace{-0.5em}

\bibliographystyle{splncs03}  
\bibliography{exchange}

\newpage
\appendix
\section{ShEx Schemas as Sets of Dependencies}
\label{app:shex-fo}

 \begin{lemma}
\label{lem:one-to-one-correspondance}
For any $\Tsig$-typed graph $(G, \typing)$, let $\rdftoinst(G,\typing)$ be defined as below.
For any $\Inst$ instance of $(\GTsig, \emptyset)$ satisfying  $\Triple^\Inst \subseteq (\IriSet \cup \BlankSet) \times \IriSet \times (\IriSet \cup \BlankSet \cup \LitSet)$ and $\LitSym^\Inst \subseteq \LitSet$ and $T^\Inst \subseteq \IriSet$ for all $T \in \Tsig$, let $\rdffrominstance(I)$ be defined as below.
\begin{align*}
  \rdftoinst(G, \typing) =&\{\Triple(s,p,o) \mid (s,p,o) \in G\}\\
  &\cup \{T(n) \mid n \text{ node of } G, T \in \typing(n)\}\\
  \rdffrominstance(I) =& (G, \typing) \text{ with }  G = \{(s,p,o)\mid \Triple(s,p,o) \in I \} \\
  & \text{ and } \typing(n) = \{T \in \Tsig \cup \{\LitSym\} \mid T(n) \in \Inst\} \text{ for any }n\text{ node of }G
\end{align*}
Then for any $\Tsig$-typed graph $(G, \typing)$ and any instance $\Inst$ of $(\GTsig, \emptyset)$ in the domain of $\rdffrominstance$, the following hold:
\begin{enumerate}
\item \label{item:one-to-one-rdftoinst} $\rdftoinst(G,\typing)$ is an instance of $(\GTsig, \emptyset)$;
\item \label{item:one-to-one-rdffrominst} $\rdffrominstance(\Inst)$ is a $\Tsig$-typed graph;
\item \label{item:one-to-one} $\rdffrominstance(\rdftoinst(G,\typing))$ is defined and is equal to $(G,\typing)$.
\end{enumerate}
\end{lemma}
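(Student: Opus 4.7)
The plan is to verify the three items in order, each by unfolding the two definitions and matching sorts.

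For item~(1), I check that $\rdftoinst(G, \typing)$ assigns to each symbol of $\GTsig = \{\Triple\} \cup \Tsig \cup \{\LitSym\}$ an interpretation of the correct arity, landing in $\Dom$. The interpretation of $\Triple$ is $G$ itself, which by definition of an RDF graph is contained in $(\IriSet \cup \BlankSet) \times \IriSet \times (\IriSet \cup \BlankSet \cup \LitSet) \subseteq \Dom^3$; for every $T \in \Tsig \cup \{\LitSym\}$, $T^\Inst$ is a subset of the nodes of $G$, hence of $\Dom$. Since $(\GTsig, \emptyset)$ carries no integrity constraints, nothing else is needed. Item~(2) is the symmetric verification: the hypothesis on $\Inst$ forces the set $\{(s,p,o)\mid \Triple(s,p,o) \in \Inst\}$ to lie in $(\IriSet \cup \BlankSet) \times \IriSet \times (\IriSet \cup \BlankSet \cup \LitSet)$, hence to be an RDF graph, while $\typing(n) \subseteq \Tsig \cup \{\LitSym\}$ makes $\typing$ a map into $2^{\Tsig \cup \{\LitSym\}}$.

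For item~(3), I first check that the structure $\Inst \colonequals \rdftoinst(G, \typing)$ lies in the domain of $\rdffrominstance$. The triple-sort requirement is immediate from $G$ being a graph. The remaining conditions $\LitSym^\Inst \subseteq \LitSet$ and $T^\Inst \subseteq \IriSet$ for $T \in \Tsig$ rely on the sort discipline on $\Tsig$-typed graphs implicit in the paper, namely that literal nodes are typed only by $\LitSym$ while IRI and blank nodes are typed only by shape names. Once this is granted, letting $(G', \typing') \colonequals \rdffrominstance(\Inst)$ and unfolding yields $G' = \{(s,p,o)\mid \Triple(s,p,o) \in \Inst\} = G$ by construction, so the nodes of $G'$ and $G$ coincide; and for any such node $n$, $\typing'(n) = \{T \mid T(n) \in \Inst\} = \typing(n)$ since $\rdftoinst$ places $T(n)$ into $\Inst$ exactly when $T \in \typing(n)$.

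The only step that goes beyond literal definition-chasing is invoking the sort discipline on $\Tsig$-typed graphs in item~(3); everything else is set-theoretic bookkeeping. I expect no technical obstacle in the argument itself, only in being explicit about this implicit hypothesis.
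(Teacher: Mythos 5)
Your proof is correct and follows essentially the same route as the paper's: items (1) and (2) by direct definition-unfolding, and item (3) by first verifying that $\rdftoinst(G,\typing)$ satisfies the domain conditions of $\rdffrominstance$ and then matching the two structures componentwise. The one subtlety you flag --- that $\LitSym^\Inst \subseteq \LitSet$ and $T^\Inst \subseteq \IriSet$ need the convention that literals carry only $\LitSym$ and non-literals only shape names --- is also present (and equally implicit) in the paper, which dismisses it with ``the fact that $\typing$ is a typing''; your explicit acknowledgment of this hypothesis is if anything more careful.
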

\begin{proof} 
\begin{enumerate}
\item Immediately follows from the definition $\rdftoinst(G,\typing)$.
\item Immediately follows from the definition of $\rdffrominstance(\Inst)$.
\item 
Let $\Inst = \rdftoinst(G,\typing)$. 
By definition, $\rdffrominstance(\Inst)$ is defined if (a) $\Triple^\Inst \subseteq (\IriSet \cup \BlankSet) \times \IriSet \times (\IriSet \cup \BlankSet \cup \LitSet)$ and (b) $\LitSym^\Inst \subseteq \LitSet$ and (c) $T^\Inst \subseteq \IriSet$ for all $T \in \Tsig$.
Note that (a) follows from the definition of $\rdftoinst$ and the fact that $G$ is an RDF graph.
Also, (b) and (c) follow from the definition of $\rdftoinst$ and the fact that $\typing$ is a typing.
Then it immediately follows from the definitions that $\rdffrominstance(\rdftoinst(G,\typing)) = (G, \typing)$.

\end{enumerate}
\end{proof}

\subsection{Proof of Lemma~\ref{lem:shex-as-dependencies}}
 Take a typed graph $(G,\typing)$ and ShEx schema
    $\ShexSch = (\Tsig, \delta)$.  For the $\Rightarrow$ direction, we will prove by
    contrapositive. Assume that
    $(G,\typing)\not\models\shexdep$. Our goal is to
    prove $(G, \typing)$ is not correctly typed w.r.t. $\ShexSch$. By definition
    of entailment, there is one dependency $\sigma \in \shexdep$ that is not satisfied. The dependency $\sigma$ can be of the following forms:

    \begin{itemize}
    \item $\multleastone(T_s,p)$. By construction of $\shexdep$, the dependency $\sigma$
    occurs when a triple constraint is of the form $p\dbl T_o^\mu$ where $\mu\in \{\ONE,\PLUS\}$ and $p$ some property. Since $\sigma$ is not satisfied, $T_s \in \typing(n)$ for some node $n$ of $G$. Because the cardinalty of the set of triples with node $n$ and propery $p$ is 0, the definition
    of correctly typed in the typed graph $(G, \typing)$ w.r.t. $\ShexSch$ is violated.
    \item $\multmostone(T_s,p)$. By construction of $\shexdep$, the dependency $\sigma$
    occurs when a triple constraint is of the form $p\dbl T_o^\mu$ where $\mu\in \{\ONE,\MAYBE\}$. Since $\sigma$ is not satisfied, we have that $(s, p, o_1)\in G$
    and $(s, p, o_2)\in G$ and $T_s \in \typing(s)$, which violates the definition
    of correctly typed in the typed graph $(G, \typing)$ w.r.t. $\ShexSch$.
        \item $\tconstr(T_s,T_o,p)$. By construction of $\shexdep$, the dependency $\sigma$
    occurs when a triple constraint is of the form $p\dbl T_o^\mu$ where $\mu\in \{\ONE,\MAYBE,\MANY,\PLUS\}$. Since  $\sigma$ is not satisfied, $(s, p, o) \in G$ and $T_s \in \typing(s)$. Because the node $o \in G$, it must hold $T_o \in \typing(o)$. But this fact is not, then the typed graph $(G, \typing)$ w.r.t. $\ShexSch$ is not correctly typed.
    \end{itemize}   

    For the $\Leftarrow$ direction, assume that $(G,\typing)\models\shexdep$. Our goal is to prove $(G, \typing)$ is correctly typed w.r.t. $\ShexSch$. We will prove by contradiction. Suppose that $(G, \typing)$ is not correctly typed  w.r.t. $\ShexSch$. Then we have two cases when there is a node $n\in G$:
    \begin{itemize}
    	\item $\LitSym \in \typing(n)$ and $n\not \in \LitSet$. By definition of $\LitSym$, the node $n$ is of type literal, means $n\in \LitSet$. Contradiction.
    	\item We have two sub-cases when $T\in \typing(n)$: 
    	\begin{itemize}
    		\item $n\not \in \IriSet$. By definition, all nodes of $G$ are in the set $\LitSet\cup \IriSet \cup \BlankSet$. Because $T(n)$ is fact in $(G,\typing)$, then $n\in \IriSet\cup\BlankSet$. Because blank nodes are potentially IRIs, then $n\in \IriSet$. Contradiction.
    		\item There is a triple constraint $p\dbl{}S^\mu \in \delta(T)$  such that 
    		\begin{itemize}
    			\item There is a triple $(n,p,m)$ such that $S\not \in \typing(m)$. Since $T(n)$ and $\Triple(n,p,m)$ are facts in $(G,\typing)$ and $ (G,\typing)\models \tconstr(T,S,p)$, then $S(m)$ is fact in  $(G,\typing)$. Thus, $S \in \typing(m)$. Contradiction.
    			\item Let $K$ be the set of triples whose first element is $n$ and second element is $p$. The cardinality of $K$ is not bounded by $\mu$. Thus, we have the following cases:
    			\begin{itemize}
    				\item When $\mu=\ONE$ and $|K|\not=1$. It follows that $\multmostone(T,S,p) \in \shexdep$ and $\multleastone(T,S,p)\in \shexdep$. Since $(G,\typing)\models \shexdep$, then $|K|=1$. Contradiction.
    				\item When $\mu=\MAYBE$ and $|K|>1$. It follows that $\multmostone(T,S,p) \in \shexdep$. Since $(G,\typing)\models \shexdep$, then $|K|\leq 1$. Contradiction.    	
    				\item When $\mu=\PLUS$ and $|K|<1$. It follows that $\multleastone(T,S,p) \in \shexdep$. Since $(G,\typing)\models \shexdep$, then $|K|\ge 1$. Contradiction. 
    			\end{itemize}
    		\end{itemize}
    	\end{itemize}
    \end{itemize}

\section{The chase}
\label{app:universal}
Let $\DES = (\Rsch, \ShexSch, \stdep, \Fsig, \Fint)$ be a data exchange setting with $\Rsch = (\Rsig, \fdep)$ and $\ShexSch = (\Tsig, \delta)$, and let $\Inst$ be an instance of $\Rsig \cup \GTsig$.
For a tgd or std  $\sigma = \forall \vect{x}. \phi \to \psi$ and a homomorphism $h: \phi \to \Inst$, we say that $\sigma$ is applicable to $\Inst$ with $h$ if (1) either $\psi$ is without existential quantifier and $\Inst \cup \Fint, h \not\models \psi$, or (2) $\psi = \exists \vect{y}. \psi'$ and for all $h'$ extension of $h$ on $\vect{y}$, $\Inst \cup \Fint, h' \not\models \psi'$.
Then applying $\sigma$ to $\Inst$ with $h$ yields the instance $\Inst'$ defined as follows.
In the case (1), $\Inst' = h^\Fint(\psi)$.
In the case (2), $\Inst' = h'^\Fint(\psi')$ where $h'$ is an extension of $h$ and for $y \in \vect{y}$, $h'(y)$ is a fresh null value that depends on $\ShexSch$.
If $\delta(T)$ contains a triple constraint $p \dbl \LitSym^\mu$, then $h'(y) \in \NullLitSet \setminus \dom(\Inst)$.
If $\delta(T)$ contains $p \dbl T'^\mu$ for some $T' \in \Tsig$, then $h'(y) \in \BlankSet \setminus \dom(\Inst)$.
For an egd $\sigma = \forall \vect{x}. \phi \to x = x'$, if there exists a homomorphism $h: \phi \to \Inst$ s.t. $h(x) \neq h(x')$, we say that $\sigma$ is applicable to $\Inst$ with $h$ and the result is (1) the instance $\Inst'$ obtained by replacing $h(x)$ by $h(x')$ (resp. $h(x')$ by $h(x)$) in all facts of $\Inst$ if $h(x)$ (resp. $h(x')$) is a null value, and (2) the failure denoted $\bot$ if both $h(x)$ and $h(x')$ are non nulls.
We write $\chasestep{\Inst}{\sigma}{h}{U}$ if $\sigma$ is applicable to $\Inst$ with $h$ yielding $U$, where $U$ is either another instance or $\bot$, and  $\chasestep{\Inst}{\sigma}{h}{U}$ is called a \emph{chase step}.

Let $\Sigma$ be a set of dependencies and $\Inst$ be an instance.
A \emph{chase sequence} of $\Inst$ with $\Sigma$ is a finite or infinite sequence of chase steps $\chasestep{\Inst_i}{\sigma_i}{h_i}{\Inst_{i+1}}$ for $i = 0, 1,\ldots$, with $\Inst_0 = \Inst$ and $\sigma_i$ a dependency in $\Sigma$.
The well-known result from \cite{fagin:2005a} still holds in our setting: if there exists a finite chase sequence then it constructs a universal solution.
\section{Proofs of Theorems~\ref{thm:key-covered-fully-typed-iff-consistent} and \ref{thm:key-covered-decidable}}
\label{app:consistency}
Before proving the theorems, we define a mapping $h^F$ that will be used to define the notion of homomorphism from a formula into an instance. 
Let $\Fsig$ be a function signature and $F$ be an interpretation of $\Fsig$.
For a term $t$ over $\Fsig$ and a mapping $h: \Vars \to \Dom$, we define $h^F(t)$ as:
$$
h^F(t) =
\begin{cases}
  h(x) & \text{ if } t = x \in \Vars\\
  a & \text{ if } t = a \in \Dom\\
  f(h^F(\vect{t'})) & \text{ if } t = f(\vect{t'}) \text{ is a function term}.
\end{cases}
$$
The mapping $h^F$ is extended on atoms and conjunctions of atoms as expected: $h^F(R(\vect{t})) = R(h^F(\vect{t}))$ and $h^F(\bigwedge_{i \in 1..k} R_i(\vect{t}_i)) = \bigwedge_{i \in 1..k}h^F(R_i(\vect{t}_i))$.
Note that if the argument of $h^F$ does not contain function terms, the interpretation $F$ is irrelevant so we allow to omit the $F$ superscript and write e.g. $h(\vect{t})$ instead of $h^F(\vect{t})$.

A \emph{homomorphism} $h: \phi \to \Model$ between the conjunction of atoms $\phi$ over signature $\FOsig = \Rsig \cup \Fsig$ and the model $\Model = \Inst \cup F$ of $\FOsig$ is a mapping from $\fvars(\phi)$ to $\Dom$ s.t. for every atom $R(\vect{t})$ in $\phi$ it holds that $R(h^F(\vect{t}))$ is a fact in $\Inst$, where $\Inst$, resp. $F$, is the restriction of $\Model$ to $\Rsig$, resp. to $\Fsig$.

Remark that if $\phi$ does not contain function terms, then $F$ in the above definition is irrelevant and we write $h: \phi \to \Inst$ instead of $h: \phi \to \Model$ and $h(\vect{t})$ instead of $h^F(\vect{t})$.
\subsection{Proof of Theorem~\ref{thm:key-covered-fully-typed-iff-consistent}}


  Take a data exchange setting $\DES  = (\Rsch, \ShexSch, \stdep, \Fsig, \Fint)$ with $\ShexSch = (\Tsig, \delta)$.
  Assume first that $\DES$ is consistent, and let $\Inst$ be a valid instance of $\Rsch$ and $J$ be a solution for $\Inst$ by $\DES$.
  That is, $\Inst \cup J \models \stdep \cup \shexdep$.
  Let $T_s,T_o,p$ and $\mu \in \{\ONE, \MAYBE\}$ be such that $p\dbl{} T_o^\mu \in \delta(T_s)$.
  Suppose by contradiction that, for $i = 1,2$, $\sigma_i = \forall \vect{x}. \phi_i(\vect{x}_i, \vect{y}_i) \Rightarrow \Triple(f(\vect{x}_i), p, o_i) \wedge T_s(f(\vect{x}_i)) \wedge T_o(o_i)$ are two contentious stds in $\stdep$ and they are not functionally overlapped that is $\Inst \not\models \forall \vect{x}_1,\vect{x}_2, \vect{y}_1, \vect{y}_2. \phi_1(\vect{x}_1,\vect{y}_1) \wedge \phi_2(\vect{x}_2,\vect{y}_2) \wedge \vect{x}_1=\vect{x}_2 \Rightarrow o_1 = o_2$.
  That is, there is a homomorphism $h : \phi_1 \wedge \phi_2 \to \Inst$ s.t. $\Inst,h \models \phi_1 \wedge \phi_2$ but $h^\Fint(o_1) \neq h^\Fint(o_2)$.
  Because $J$ is a solution of $\DES$, we know that $\Inst \cup J \cup \Fint \models \sigma_i$ for $i=1,2$ and deduce that $J$ contains the facts (1) $\Triple(h^\Fint(f(\vect{x}_1)),p,h^\Fint(o_1))$, $\Triple(h^\Fint(f(\vect{x}_1)),p,h^\Fint (o_2))$ and $T_s(h^\Fint(f(\vect{x}_1)))$.
  On the other hand, by definition $\multmostone(T_s,p) = \forall x,y,z.\ T_s(x) \wedge \Triple(x, p, y) \wedge \Triple(x, p, z) \Rightarrow y = z$ is in $\shexdep$ and $J \models \multmostone(T_s,p)$.
  But $\multmostone(T_s,p)$ applies on the facts (1) with homomorphism $h'$ s.t. $h'(x) = h^\Fint(f(\vect{x}_1))$, $h'(y) = h^\Fint(o_1)$ and $h'(z) = h^\Fint(o_2)$, therefore $h^\Fint(o_1) = h^\Fint(o_2)$.
  Contradiction.

  Assume now that $\DES$ is key-covered, and let $\Inst$ be a valid instance of $\Rsch$.
  We construct a solution for $\Inst$ by $\DES$.
  We first chase $I$ with $\stdep$ until no more rules are applicable, yielding an instance $J$.
  Because $\stdep$ contains only stds (that is tgds on different source and target signatures), we know that $J$ exists.
  We now show that no egd from $\shexdep$ is applicable to $J$.
  By contradiction, let $\multmostone(T_s,p) = \forall x,y_1,y_2.\ T_s(x) \wedge \Triple(x, p, y_1) \wedge \Triple(x, p, y_2) \Rightarrow y_1 = y_2$ be an egd that is applicable to $J$.
  That is, there is a homomorphism $h: T_s(x) \wedge \Triple(x, p, y_1) \wedge \Triple(x, p, y_2) \to \Inst$ s.t. $\Triple(h(x), p, h(y_1))$, $\Triple(h(x), p, h(y_2))$ and $T_s(h(x))$ are facts in $J$ and $h(y_1) \neq h(y_2)$.
  By construction of $J$ as the result of chasing $\Inst$ with $\stdep$ and by the fact that $\stdep$ is fully-typed, it follows that there are two (not necessarily distinct) stds $\sigma_i = \forall \vect{x},\vect{y}_i. \phi_i(\vect{x}, \vect{y}_i) \Rightarrow \Triple(f(\vect{x}), p, o_i) \wedge T_s(f(\vect{x})) \wedge T_o(o_i)$ and there exist $h_i : \phi_i \to \Inst$ homomorphisms satisfying the following: (2) $f^\Fint(h_i(\vect{x})) = h(x)$, and $h_i(z_i) = h(y_i)$ if $o_i = z_i$ are variables, and $g^\Fint(h_i(\vect{z}_i)) = h(y_i)$ if $o_i = g(\vect{z}_i)$ for some vectors of variables $\vect{z}_i$ and function symbol $g$, for $i=1,2$.
  Then $h_1 \cup h_2: \phi_1 \wedge \phi_2 \to \Inst$ is a homomorphism, and because $\DES$ is key-covered we know that $h_1(o_1) = h_2(o_2)$.
  This is a contradiction with $h(y_1) \neq h(y_2)$ using (2) and the fact that the functions $f^\Fint$ and $g^\Fint$ are injective, and implies that no egd from $\shexdep$ is applicable to $J$.

  Finally, we are going to add the facts $J'$ to $J$ so that $J \cup J'$ satisfies the tgds and the egd's in $\shexdep$.
  Note that $J$ does not satisfy $\shexdep$ because some of the $\multleastone(T_s,p)$ might not be satisfied.
  For any $\multleastone(T_s,p)$ in $\shexdep$, let $b^{T_s,p}\in \BlankSet$ be a blank node distinct from other such blank nodes, that is, $b^{T_s,p} \neq b^{T_o,p'}$ if $T_s \neq T_o$ or $p \neq p'$.
  Now, let $J_1$ and $J_2$ be the sets of facts defined by:
  \begin{align*}
  J_1 &= \left\{    
    T_1(b^{T_s,p}) \mid p\dbl T_1^\mu \in \delta(T_s) \text{ for } \mu \in \{\ONE,\PLUS\}
    \right\}\\
  J_2 &= \left\{    
    \Triple(b^{T_s,p}, q, b^{T_1,q}) \mid T_1(b^{T_s,p}) \in J_1 \text{ and }  q\dbl T_2^\mu \in \delta(T_1) \text{ for } \mu \in \{\ONE,\PLUS\}
    \right\}
  \end{align*}
Intuitively, $J_1$ adds to the graph nodes $b^{T_s,p}$ whenever the property $p$ is required by type $T_s$ in $\ShexSch$.
A property is required if it appears in a triple constraint with multiplicity $\ONE$ or $\PLUS$.
Such node has type $T_1$ as required by the corresponding triple constraint $p\dbl{}T_1^\mu$ in $\delta(T_s)$.
Then, $J_2$ adds to the graph triples for the properties $q$ that are required by the nodes added by $J_1$.
Remark that $J_1 \cup J_2$ is a correctly typed graph.
We finally connect $J_1 \cup J_2$ to $J$.
Let 
$$
J_3 = \left\{
  \Triple(a, p, b^{T_s,p}) \mid T_s(a) \in J \text{ and } \not\exists \Triple(a,p,b') \text{ in } J \text{ and } p\dbl{}T_o^\mu \in \delta(T_s) 
\right\}
$$
Then $G = J \cup J_1 \cup J_2 \cup J_3$ satisfies the tgds in $\shexdep$.
It remains to show that $G$ also satisfies the egd's in $\shexdep$.
This is ensured by construction as $J$ satisfies the egd's and $J_2$ and $J_3$ add a unique triple $\Triple(b, p, b')$ only to unsatisfied typing requirements $T_s(b)$ for types $T_s$, that is, for every  $\Triple(b,p,b')$ added by $J_2$ or $J_3$ there is no different $\Triple(b,p,b'')$ in $J \cup J_2 \cup J_3$.

This concludes the proof of Theorem~\ref{thm:key-covered-fully-typed-iff-consistent}.

\subsection{Proof of Theorem~\ref{thm:key-covered-decidable}}

Let $\DES = (\Rsch, \ShexSch, \stdep, \Fsig, \Fint)$ with $\ShexSch = (\Tsig, \delta)$ and $\Rsch = (\Rsig, \fdep)$ be a fully-typed data exchange setting.

The proof goes by reduction to the problem of functional dependency propagation.
We start by fixing some vocabulary and notions standard in databases.
A \emph{view} over a relational signature $\Rsig$ is a set of queries over $\Rsig$.
Recall that a $n$-ary query is a logical formula with $n$ free variables.
If $\Vsig = \{V_1, \ldots, V_n\}$ is a view, we see $\Vsig$ as a relational signature, where the arity of the symbol $V_i$ is the same as the arity of the query $V_i$, for $1 \le i \le n$.
Given a relational schema $\Rsch = (\Rsig, \fdep)$, a view $\Vsig$, and an instance $\Inst$ of $\Rsch$, by $\Vsig(\Inst)$ we denote the result of applying the query $\Vsig$ to $\Inst$.
The latter is an instance over the signature $\Vsig$.

Now, the problem of functional dependency propagation $\fdprop(\Rsch, \Vsig, \fdep^{\Vsig})$ is defined as follows.
Given a relational schema $\Rsch = (\Rsig, \fdep)$, a view $\Vsig$ over $\Rsig$, and a set of functional dependencies $\fdep^\Vsig$ over $\Vsig$, $\Rsch = (\Rsig, \fdep)$ holds iff for any $\Inst$ valid instance of $\Rsch$, $\Vsig(\Inst) \models \fdep^\Vsig$.
It is known by \cite{klug:1982a} that the problem $\fdprop(\Rsch, \Vsig, \fdep^\Vsig)$ is decidable.

We will construct a view $\Vsig$ and a set $\fdep^\Vsig$ of functional dependencies over $\Vsig$ s.t.  $\fdprop(\Rsch, \Vsig, \fdep^\Vsig)$ iff $\DES$ is key-covered.

Let $\sigma_1, \sigma_2$ be two contentious stds from $\stdep$ that are functionally overlapping as those in the premise of the key-coverdness condition.
That is, for some $T_s,T_o,p,f$, for $i=1,2$, we have $\sigma_i = \forall \vect{x}_i, \vect{y}_i. \phi_i(\vect{x}_i, \vect{y}_i) \Rightarrow \Triple(f(\vect{x}_i), p, o_i) \wedge T_s(f(\vect{x}_i)) \wedge T_o(o_i)$.
Recall that $o_i$ and $o_i$ and either both variables, or are both functional terms with the same function symbol.
Let $\vect{z}$, resp. $\vect{z}'$ be the vectors of variables is $o_1$, resp. $o_2$.
That is, if e.g. $o_1$ is a variable then $\vect{z}$ is a vector of length one of this variable, and if $o_1 = g(z_1, \ldots, z_n)$ for some function symbol $g$, then $\vect{z} = z_1, \ldots, z_n$.
Remark that $\vect{z} \subseteq \vect{x} \cup \vect{y}_1$, and similarly for $\vect{z'}$.

Now, for any such couple $\sigma_1, \sigma_2$ of two (not necessarily distinct) stds, we define the query $V_{\sigma_1,\sigma_2}$ as the union of two queries, and the functional dependency $\fd_{\sigma_1,\sigma_2}$, as follows.
\begin{align}
  V_{\sigma_1}(\vect{x}, \vect{z}) &= \exists \vect{y}_1^{-\vect{z}}\ . \phi_1(\vect{x}, \vect{y}_1)\\
  V_{\sigma_2}(\vect{x}, \vect{z'}) &= \exists \vect{y}_2^{-\vect{z}'}\ . \phi_2(\vect{x}, \vect{y}_2)\\
  V_{\sigma_1, \sigma_2} &= q_{\sigma_1} \cup q_{\sigma_2}\\
  \fd_{\sigma_1, \sigma_2} &= V_{\sigma_1, \sigma_2}:\{1,\ldots,m\} \to \{m+1, \ldots, m+n\}
\end{align}
where for any two vectors of variables $\vect{y}$ and $\vect{z}$, $\vect{y}^{-\vect{z}}$ designates the set of variables $\vect{y} \setminus \vect{z}$, and $m$ is the length of $\vect{x}$, and $n$ is the length of $\vect{z}$ and $\vect{z}'$.
Then
\begin{align}
  \Vsig &= \left\{V_{\sigma_1, \sigma_2} \mid \sigma_1, \sigma_2 \text{ as in the premise of the condition for key-covered}\right\}\\
  \fdep^\Vsig &= \left\{\fd_{\sigma_1, \sigma_2} \mid \sigma_1, \sigma_2 \text{ as in the premise of the condition for key-covered}\right\}
\end{align}

The sequel is the proof that $\fdprop(\Rsch, \Vsig, \fdep^\Vsig)$ iff $\DES$ is key-covered, which by \cite{klug:1982a} implies that key-coverdness is decidable.

For the $\Rightarrow$ direction, suppose that $\fdprop(\Rsch, \Vsig, \fdep^\Vsig)$.
Let $\Inst$ a valid instance of $\Rsch$ and let $J = \Vsig(I)$.
We show that for any two contentious stds $\sigma_1,\sigma_2 \in \stdep$ that are functionally overlapping as in the premise of the condition for key-covered, it holds that $\Inst \cup \Fint \models \forall \vect{x}_1,\vect{x}_2, \vect{y}_1, \vect{y}_2. \phi_1(\vect{x}_1, \vect{y}_1) \wedge \phi_2(\vect{x}_2, \vect{y}_2) \wedge \vect{x}_1=\vect{x}_2 \Rightarrow o_1 = o_2$.
Let $\nu$ be a valuation of the variables $\vect{x}_1 \cup \vect{y}_1 \cup \vect{y}_2$ s.t. $\Inst \cup \Fint, \nu \models \phi_1 \wedge \phi_2$.
By definition of $q_{\sigma_1}$ and $q_{\sigma_2}$ and $\vect{x}_1=\vect{x}_2$ it is easy to see that $V_{\sigma_1, \sigma_2}(\nu(\vect{x}_1), \nu(\vect{z}))$, and $V_{\sigma_1, \sigma_2}(\nu(\vect{x}_2), \nu(\vect{z}'))$ are facts in $J$
Because $J$ satisfies $\fd_{\sigma_1,\sigma_2}$, we deduce that $\nu(\vect{z}) = \nu(\vect{z}')$, therefore $\nu^{\Fint}(o_1) = \nu^{\Fint}(o_2)$, which concludes the proof of the $\Rightarrow$ direction.

\smallskip
For the $\Leftarrow$ direction, suppose that $\DES$ is key-covered.
Let $\Inst$ a valid instance of $\Rsch$ and let $J = \Vsig(I)$.
Let $\sigma_i = \forall \vect{x}_i, \vect{y}_i. \phi_i(\vect{x}_i, \vect{y}_i) \Rightarrow \Triple(f(\vect{x}_i), p, o_i) \wedge T_s(f(\vect{x}_i)) \wedge T_o(o_i)$ for $i = 1,2$ be two stds in $\stdep$ that satisfy the premise for key-covered.
Let $V_{\sigma_1, \sigma_2}(\vect{a},\vect{b})$ and $V_{\sigma_1, \sigma_2}(\vect{a},\vect{b}')$ be two facts in $J$.
That is, by definition and $\vect{x}_1=\vect{x}_2$ there exist valuations $\nu$ of the variables $\vect{y}_1^{-\vect{z}}$ and $\nu'$ of the variables $\vect{y}_2^{-\vect{z'}}$ s.t. $I, \nu[\vect{x}_1/\vect{a},\vect{z}/\vect{b}] \models \phi_1(\vect{x}_1, \vect{y}_1)$ and  $I, \nu'[\vect{x}_2/\vect{a},\vect{z}'/\vect{b'}] \models \phi_2(\vect{x}_2, \vect{y}_2)$.
We now distinguish two cases, depending on whether the two facts were generated by the same query $V_{\sigma_i}$ (for some $i \in 1..2$), or one was generated by $V_{\sigma_1}$ and the other one by $V_{\sigma_2}$.
\begin{itemize}
\item If $(\vect{a},\vect{b}) \in V^J_{\sigma_1}$ and $(\vect{a},\vect{b}') \in V^J_{\sigma_2}$, then $I \cup \Fint, \nu\cup\nu'\cup[\vect{x}_1/\vect{a},\vect{z}/\vect{b},\vect{z'}/\vect{b'}] \models \phi_1(\vect{x}_1, \vect{y}_1) \wedge \phi_2(\vect{x}_2, \vect{y}_2)$, where $\vect{x}_1=\vect{x}_2$.
  Thus, because $\DES$ is key-covered we know that $[\vect{z}/\vect{b}]^{\Fint}(o_1) = [\vect{z'}/\vect{b'}]^{\Fint}(o_2)$, so $\vect{b} = \vect{b'}$.
  Therefore $J \models \fd_{\sigma_2, \sigma_2}$.
\item If $(\vect{a},\vect{b}), (\vect{a},\vect{b}') \in V^J_{\sigma_1}$, then by definition of the view $\Vsig$ it is easy to see that $V_{\sigma_1,\sigma_1}(\vect{a},\vect{b})$ and $V_{\sigma_1,\sigma_1}(\vect{a},\vect{b}')$ are also facts in $J$.
  Then the proof goes as in the previous case.
\end{itemize}
This concludes the proof of Theorem~\ref{thm:key-covered-decidable}.

\end{document}